\documentclass[12pt,onecolumn,draftclsnofoot,journal]{IEEEtran}
\usepackage{graphicx}
\usepackage{float}
\usepackage{epstopdf}
\usepackage[cmex10]{amsmath}
\usepackage{array}
\usepackage{cite}
\usepackage{amssymb}
\usepackage{amsfonts}
\usepackage{amsmath}
\usepackage{stackrel}
\usepackage{booktabs,multirow}
\usepackage{arydshln}
\usepackage{amsthm}
\usepackage{listings}
\usepackage{algorithm}
\usepackage{algorithmicx}
\usepackage{algpseudocode}
\usepackage{threeparttable}

\newtheorem{lem}{Lemma}
\newtheorem{rem}{Remark}
\newtheorem{theo}{Theorem}

\newfloat{routine}{htbp}{loa}
\floatname{routine}{Routine}

\makeatletter
\newcommand{\algmargin}{\the\ALG@thistlm}
\makeatother
\newlength{\forwidth}
\settowidth{\forwidth}{\algorithmicfor\ }
\algdef{SE}[parFOR]{parFor}{EndparFor}[1]
  {\parbox[t]{\dimexpr\linewidth-\algmargin}{
     \hangindent\forwidth\strut\algorithmicfor\ #1\ \algorithmicdo\strut}}{\algorithmicend\ \algorithmicfor}
\algnewcommand{\parState}[1]{\State
  \parbox[t]{\dimexpr\linewidth-\algmargin}{\strut #1\strut}}

\newlength{\ifwidth}
\settowidth{\ifwidth}{\algorithmicif\ }
\algdef{SE}[parIF]{parIf}{EndparIf}[1]
  {\parbox[t]{\dimexpr\linewidth-\algmargin}{
     \hangindent\ifwidth\strut\algorithmicif\ #1\ \algorithmicdo\strut}}{\algorithmicend\ \algorithmicif}

\hyphenation{op-tical net-works semi-conduc-tor}

\begin{document}

\title{ \Large 
Cospectral Bipartite Graphs with the Same Degree Sequences but with Different Number of Large Cycles}
\author{\IEEEauthorblockN{Ali Dehghan  and Amir H. Banihashemi}
	\thanks{This paper was presented
		in part at ISTC 2018, Hong Kong. This research was supported by NSERC Discovery Grant 217239-2013-RGPIN.}
	\IEEEauthorblockA{\\\small Department of Systems and Computer Engineering, Carleton University, Ottawa, Ontario, Canada}
}

\maketitle


\begin{abstract}
Finding the multiplicity of cycles in bipartite graphs is a fundamental problem of interest in many fields including the analysis and design of low-density parity-check (LDPC) codes.
Recently, Blake and Lin computed the number of shortest cycles ($g$-cycles,  where $g$ is the girth of the graph) in a bi-regular bipartite graph, in terms of the degree sequences and the spectrum (eigenvalues of the adjacency matrix) of the graph [{\em IEEE Trans. Inform. Theory 64(10):6526--6535, 2018}]. This result was subsequently extended in [{\em IEEE Trans. Inform. Theory, accepted for publication, Dec. 2018}] to cycles of length $g+2, \ldots, 2g-2$, in bi-regular bipartite graphs, as well as $4$-cycles and $6$-cycles in irregular and half-regular bipartite graphs, with $g \geq 4$ and $g \geq 6$, respectively. In this paper, we complement these positive results with negative results demonstrating that the information of the degree sequences and the spectrum of a bipartite graph is, in general, insufficient 
to count (a) the $i$-cycles, $i \geq 2g$, in bi-regular graphs, (b) the $i$-cycles for any $i > g$, regardless of the value of $g$, and $g$-cycles for $g \geq 6$, in irregular graphs, and (c) the $i$-cycles for any $i > g$, regardless of the value of $g$, and $g$-cycles for $g \geq 8$, in half-regular graphs. To obtain these results, we construct counter-examples using the Godsil-McKay switching.

\begin{flushleft}
\noindent {\bf Index Terms:}
Cycle multiplicity, bipartite graphs, Tanner graphs, graph spectrum, low-density parity-check (LDPC) codes, bi-regular bipartite graphs,  irregular bipartite graphs, half-regular bipartite graphs, girth.

\end{flushleft}

\end{abstract}

\section{introduction}

Bipartite graphs are commonly used in science and engineering to represent systems, where the nodes on one side of the bipartition represent the {\em variables}, and the nodes on the other side represent local {\em constraints}, each involving its adjacent variables, see, e.g.,~\cite{MR1820474}. A well-known example is the Tanner graph representation~\cite{MR650686} of low-density parity-check (LDPC) codes~\cite{book}. 
In graph representations, cycles and degree sequences often play an important role in determining the performance of the system. For example, in the case of LDPC codes, the performance of iterative message-passing algorithms, that are used in practice for the decoding, depends highly on the cycle distribution and the degree sequences of the underlying Tanner graph~\cite{MR1820479, MR2810002, xiao2009error, hu2005regular, MR3071345, asvadi2011lowering, MR2991821, MR3252383, HB-IT1, HB-IT2}. For the purpose of analysis and design of systems and codes, it is thus important to know the number of cycles of different length in the corresponding bipartite graphs, and the relationships that may exist between the cycle distribution and the degree sequences of the graph.  

The connection between the  performance of LDPC codes and cycles of the Tanner graph has motivated much research on the study of the cycle distribution and 
the counting of cycles in bipartite graphs, see, e.g.,~\cite{blake2017short},~\cite{dehghan2016new},~\cite{eigenvalue},~\cite{halford2006algorithm},~\cite{karimi2013message}.
Counting cycles of a given length, even in bipartite graphs, is known to be NP-hard~\cite{MR1405031}. 
In \cite{karimi2013message}, Karimi and Banihashemi  presented an efficient message-passing algorithm to count the number of cycles 
of length less than $2g$, in a general graph, where $g$ is the girth of the graph. The distribution of cycles in different ensembles of bipartite graphs was studied in~\cite{dehghan2016new},
where it was shown that for random ensembles of bipartite graphs, the multiplicities of cycles of different lengths have independent Poisson distributions with the expected values only a function of the cycle length and the degree distribution (and independent of the size of the graph). More recently, Blake and Lin~\cite{blake2017short} presented 
a formula to compute the multiplicity of cycles of length $g$ in bi-regular bipartite graphs as a function of the spectrum (eigenvalues of the adjacency matrix of the graph) and degree sequences of the graph. 
This result was subsequently extended in \cite{eigenvalue} to compute the number of cycles of length $g+2, \ldots, 2g-2$, in bi-regular bipartite graphs, 
as well as the number of $4$-cycles and $6$-cycles in irregular and half-regular bipartite graphs, with $g \geq 4$ and $g \geq 6$, respectively. 
It is noteworthy that, while the majority of techniques developed in the literature for counting cycles are algorithmic, 
the results in \cite{blake2017short} and \cite{eigenvalue} are presented as closed-form formulas. 

In relation to the results of \cite{blake2017short} and \cite{eigenvalue}, that use 
the spectrum $\{\lambda_i\}$ of a bipartite graph as part of the required information to derive the cycle multiplicities, we note that, in general, determining the properties of a 
graph from its spectrum is an active area of research in graph theory. For some examples, see \cite{MR2112881, MR2979293, MR3854106, MR2600481, MR2811143}.
It is known that $\sum_{i}\lambda_i^j$ for $j = 1, 2$, and $3$ is equal to $0$, the number of edges in the graph, and six times  the number of $3$-cycles of the graph, respectively. 
It is, however, not possible to extend the last result to cycles of length larger than three. (For example,  
the complete bipartite graph with one and four nodes on the two sides of the bipartition, and the union of a $4$-cycle and a single node are two bipartite graphs 
with the same  spectrum $\{-2,0, 0, 0,2\}$, but with different number of $4$-cycles.) The results of \cite{blake2017short} and \cite{eigenvalue} make this extension possible 
but with the extra information about the degree sequences of the graph.

In this paper, we complement the results of \cite{blake2017short} and \cite{eigenvalue} by demonstrating, through counter-examples, that the information 
of the degree sequences and the spectrum of a bipartite graph is, in general, insufficient to count (a) the $i$-cycles with $i \geq 2g$ in bi-regular graphs, (b) the $i$-cycles for any $i > g$, regardless of the value of $g$, and $g$-cycles for $g \geq 6$, in irregular graphs, and (c) the $i$-cycles for any $i > g$, regardless of the value of $g$, and $g$-cycles for $g \geq 8$, in half-regular graphs.
To construct our counter-examples, we use the Godsil-McKay switching \cite{MR730486}, and prove that the application of such switches to bi-regular bipartite graphs preserves the degree sequences and the spectrum of the graph.

We note that in graph theory, the Godsil-McKay switching is a well-known tool to construct cospectral graphs. For example, Bl\'{a}zsik {\it et al.} \cite{MR3291884} used the switching to construct 
two cospectral regular graphs such that one has a perfect matching while the other does not have any perfect matching. For more applications, see \cite{MR3825701, MR3742840, MR3654202}.

The summary of the results regarding the possibility of computing the number of cycles of different length in different types of bipartite graphs with different girth using only the spectrum and the degree sequences of the graph is presented in Table \ref{TXY1}. In this table, the notation ``P" (``IP'') is used to mean that it is possible (impossible), in general, to 
find the multiplicity of cycles of a given length in a graph from the spectrum and the degree sequences of the graph. 

\begin{table}[ht]
	\caption{A summary of the results on the possibility of counting cycles of length $i$ in bi-regular, half-regular and irregular bipartite graphs with girth $g$ using only the spectrum and the degree sequences of the graph.  (Notations ``P"  and ``IP'' are used for ``possible'' and ``impossible,'' respectively.)}
	\begin{center}
		\scalebox{1}{
			\begin{tabular}{ |c|c||c|c|c|  }
				\hline
				&              & $i=g$ & $ g+2 \leq i \leq  2g-2$ &  $  2g \leq i$ \\ \hline \hline
				
				\multirow{1}{*}{Bi-regular}
				& $g\geq 4$    & P \cite{blake2017short}   & P  \cite{eigenvalue}   & IP (Section \ref{sec3}) \\ \hline \hline

				\multirow{3}{*}{Half-regular}
				& $g = 4   $   & P  \cite{eigenvalue}   & IP  (Subsection \ref{bnh})  & IP (Subsection \ref{bnh}) \\
				& $g = 6$      & P  \cite{eigenvalue}   & IP (Section \ref{sec36})   & IP (Section \ref{sec36})  \\
				& $g\geq 8$    &IP   (Section \ref{sec36})   & IP  (Section \ref{sec36})   & IP (Section \ref{sec36})  \\ \hline \hline
				
				\multirow{3}{*}{Irregular}
				& $g=4$        & P   \cite{eigenvalue}  & IP (Subsection \ref{bnh})   & IP (Subsection \ref{bnh})  \\
				& $g=    6$    &IP   (Subsection \ref{subsec33})  & IP  (Subsection \ref{subsec33})  & IP  (Subsection \ref{subsec33})\\ 
				& $g\geq 8$    &IP   (Subsection \ref{subsec22})  & IP   (Subsection \ref{subsec22}) & IP (Subsection \ref{subsec22}) \\ \hline
				
			\end{tabular}
		}
	\end{center}
	\label{TXY1}
\end{table}

The organization of the rest of the paper is as follows: In Section~\ref{sec1}, we present some definitions and notations. 
Next, in Section~\ref{sec3}, we construct two  bi-regular bipartite graphs such that they have the same spectrum, degree sequences and girth, 
but different number of $i$-cycles for $i \geq 2g$. This  demonstrates that, in general, it is not possible to determine the number of $i$-cycles for $i \geq 2g$ in a bi-regular bipartite graph as a function of only the spectrum and the degree sequences of the graph.
In Section~\ref{sec35}, we study the possibility of computing the multiplicity of short cycles of irregular bipartite graphs using only the spectrum and degree sequences, 
and demonstrate through some graph constructions that the answer is generally negative, except for the case of $4$-cycles in graphs with $g \geq 4$ (the equation for the 
multiplicity of $4$-cycles was derived in \cite{eigenvalue} as a function of graph spectrum and its degree sequences). 
In Section~\ref{sec36}, we continue our study of computing the multiplicity of short cycles in half-regular bipartite graphs, and show that for all girths and cycle sizes, with the exception
of $6$-cycles in graphs with $g \geq 6$ (and $4$-cycles in graphs with $g \geq 4$), the information of only the spectrum and degree sequences is insufficient to count the cycles. 
The paper is concluded with some remarks in Section~\ref{sec5}.

\section{Definitions and notations}
\label{sec1}

A  graph $G $ is defined as a set of vertices or nodes $V(G)$ and a set of edges $E(G)$, where $E(G)$ is a subset of
the pairs $\{\{v,u\}: v,u\in V(G) , v\neq u\}$. The shorthands $V$ and $E$ are used if there is no ambiguity about the graph. An edge $e \in E$ with endpoints $u \in V$ and $w \in V$ is denoted by $\{u,w\}$, or by $uw$ or $wu$, in brief. Throughout this work, we consider undirected graphs with
no loop or parallel edges (i.e., simple graphs).

A {\it walk} of length $k$ in the graph $G$ is a sequence of nodes
$v_1, v_2, \ldots , v_{k+1}$ in $V$ such that $\{v_i, v_{i+1}\} \in E$, for all $i \in \{1, \ldots , k\}$.  A walk is a {\it path} if all the nodes $v_1, v_2, \ldots , v_k$ are distinct. A walk is called a
{\it closed walk}  if the two end nodes are identical, i.e.,
if $v_1 = v_{k+1}$. Under the same condition, a path is called a {\it cycle}. 

The {\em length} of a walk, path or cycle is the number of its edges. We  use the notation $P_n$ to denote a path with $n$ nodes. We denote cycles of length $k$, also referred to as $k$-cycles, by $C_k$.  The length of the shortest cycle(s) in a graph is called {\em girth} and is denoted by $g$.

A graph $G=(V,E)$ is called {\it bipartite}, if the node set $V$ can be
partitioned into two disjoint subsets $U$ and $W$, i.e., $V = U \cup W \text{ and } U \cap W =\emptyset $, such that every edge in $E$ connects a node
from $U$ to a node from $W$. A graph is bipartite if and only if the lengths of all its cycles are even.
Tanner graphs of LDPC codes are bipartite graphs, in which $U$ and $W$ are referred to as {\it variable nodes} and {\it check nodes}, respectively. 
Parameters $n$ and $m$ in this case are used to denote $|U|$ and $|W|$, respectively. Parameter $n$ is the code's block length and the code rate 
$R$ satisfies $R \geq 1- (m/n)$. 

A  graph is called {\em complete}
if every node is connected to all the other nodes. We use the notation $K_a$ for a complete graph with $a$ nodes. A bipartite graph $G(U \cup W, E)$ is called {\em complete}, and is denoted by $K_{|U|,|W|}$, if every node in $U$ is connected to every node in $W$. 

The number of edges incident to a node $v$ is called the {\em degree} of $v$, and is denoted by $d(v)$. The {\it degree sequences} of a bipartite graph $G$ are defined as the two monotonic non-increasing sequences of the node degrees on the two sides of the graph. 
For example, the complete bipartite graph $K_{4,3}$ has degree sequences $(3,3,3,3)$ and $(4,4,4)$. The degree sequences also contain the 
information about the number of nodes on each side of the graph.
A bipartite graph $G = (U\cup W,E)$ is called {\it bi-regular}, if all the nodes on the same side of the bipartition have the same degree,
i.e., if all the nodes in $U$ have the same degree $d_u$ and all the nodes in $W$ have the same degree $d_w$. We also call such graphs $(d_u,d_w)$-regular graphs.
It is clear that, for a bi-regular graph, $|U|d_u=|W|d_w=|E(G)|$. A bipartite graph is called {\em half-regular}, if all the nodes on one side of the bipartition have the same degree.
A half-regular Tanner graph can be either variable-regular or check-regular. 
A bipartite graph that is not bi-regular is called {\it irregular}. With this definition, half-regular graphs are a special case of irregular graphs.

A graph $G$ is {\it connected}, if there is a path between any two nodes of $G$. If the graph $G$ is not connected, we say that it is disconnected. A {\it connected  component}  of  a  graph  is  a  connected
subgraph such that there are no edges between nodes of the subgraph and nodes of the rest of the graph.

The {\it adjacency matrix} of a graph $G$ is the matrix $A = [a_{ij}]$, where $a_{ij}$ is the number of edges connecting the node $i$ to the node
$j$ for all $i, j\in V$. The matrix $A$ is symmetric and since we have assumed that $G$ has no parallel edges or loops, $a_{ij}\in\{0, 1\}$,
for all $i, j\in V$, and $a_{ii} = 0$, for all $i \in V$. The set of the eigenvalues $\{\lambda_i\}$ of $A$ is called the {\em spectrum} of the graph. 
It is well-known that the spectrum of a disconnected graph is  the disjoint union of the spectra of its components \cite{MR2022290}.
One important property of the adjacency matrix is that the number of walks
between any two nodes of the graph can be determined using the powers of this matrix. More precisely, the entry in
the $i^{\text{th}}$ row and the $j^{\text{th}}$ column of $A^k$, $[A^k]_{ij}$ , is the number of walks of length $k$ between nodes $i$ and $j$. In particular, $[A^k]_{ii}$
is the number of closed walks of length $k$ containing node $i$. The total number of closed walks of length $k$ in $G$ is thus $tr(A^k)$, where $tr(\cdot)$ is the trace of a matrix.
Since $tr(A^k)= \sum_{i=1}^{|V|}\lambda_i^k$, it follows that the multiplicity of closed walks of different length in a graph can be obtained using the spectrum of the graph.
It is also known that $\sum_{i}\lambda_i(G)^j$, for $j=1, 2, 3$, is equal to $0$, $|E(G)|$, and $6 \times N_3(G)$, respectively, where $N_3(G)$ is the number of $3$-cycles in $G$. 
It is, however, not possible to extend the last result to cycles of length larger than three, and find the multiplicity of such cycles as a function of only the spectrum. 
For example,  the complete bipartite graph $K_{1,4}$ and the graph $C _4 \cup K_1$ (the union of a $4$-cycle and a single node) 
are two bipartite graphs with the same  spectrum $\{-2,0, 0, 0,2\}$, but with different number of $4$-cycles.

To devise our counter-examples, we often use cycles and paths. The path graph $P_n$ has the following spectrum:
\begin{equation}
2\cos \Big(\dfrac{\pi j}{n+1}\Big),\,\,\, j=1,\ldots,n\:,
\label{path}
\end{equation}
and the spectrum of a cycle of length $n$, $C_n$, is as follows:
\begin{equation}
2\cos \Big(\dfrac{2\pi j}{n }\Big),\,\,\, j=0,\ldots,n-1\:.
\label{cycle}
\end{equation}

In general, the spectrum of a graph does not uniquely determine the graph. Two graphs are called {\it cospectral} or {\it isospectral} if they have the same spectrum.
On the other hand, there are graphs that are known to be uniquely determined by their spectrum.
Two examples are the complete graph $K_n$, and the complete bipartite graph $K_{n,n}$ \cite{MR2022290}.

\section{Counting large cycles in bi-regular bipartite graphs}
\label{sec3}

In this section, we demonstrate that the knowledge of spectrum and degree sequences of a bi-regular bipartite  graph is not in general sufficient to determine the multiplicity of cycles of length $2g$ and larger.
We start by providing a counter-example of two regular bipartite graphs whose spectrum, degree sequences and girth are identical but have different number of cycles of length $2g$ and larger. 
To construct this counter-example, we use the concept of {\it switching} in graphs and in particular, {\it Godsil-McKay switching}~\cite{MR730486}. The latter is a graph transformation 
that maintains the spectrum of the graph. We also prove that  Godsil-McKay switching, in general, maintains the degree sequences of a bi-regular bipartite graph and can thus be used to construct cospectral bi-regular bipartite graphs 
with similar degree sequences, but different cycle distributions for cycle lengths larger than $2g-2$.

\begin{theo}\label{Th5} [Godsil-McKay switching \cite{MR730486}]
Let $G$ be a graph and let $\{ X_1, \ldots, X_{\ell}, Y\}$ be a partition of the node set $V (G)$ of $G$.
Suppose that for every node $y\in Y$, and every $i\in \{1,\ldots,\ell\}$, the node $y$ has either $0$, $
\frac{1}{2} |X_i|$ or $|X_i|$ neighbors
in $X_i$. Moreover, suppose that for each  $i,j\in \{1,\ldots,\ell\}$ ($i$ and $j$ can be equal), all the nodes in $ X_i$ have the same number of
neighbors in $X_j$. Construct a new graph $G'$ as follows: For each $y\in  Y$ and $i\in \{1,\ldots,\ell\}$ such that $y$
has $\frac{1}{2} |X_i|$ neighbors in $X_i$, delete the corresponding $\frac{1}{2} |X_i|$ edges and join $y$ instead to the $\frac{1}{2} |X_i|$ other
nodes in $X_i$. Then, the graphs $G$ and $G'$ are cospectral.
\end{theo}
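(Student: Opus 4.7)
The plan is to exhibit an explicit symmetric involution $Q$ with $QAQ = A'$, where $A$ and $A'$ denote the adjacency matrices of $G$ and $G'$ respectively. Since $Q^2 = I$ forces $Q^{-1} = Q$, this realizes $A'$ as a similarity transform of $A$ and so guarantees the two matrices have identical characteristic polynomials and hence identical spectra.

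First I would order the vertices of $V(G)$ blockwise as $X_1, X_2, \ldots, X_\ell, Y$, so that $A$ admits a natural block decomposition with blocks $B_{ij}$ recording the edges from $X_i$ to $X_j$ (with the last block-row and block-column attached to $Y$). I would then take $Q$ to be the block-diagonal matrix $Q = \mathrm{diag}(Q_1, \ldots, Q_\ell, I_{|Y|})$, where $Q_i = \tfrac{2}{|X_i|}J_i - I_i$ and $J_i$ is the $|X_i|\times|X_i|$ all-ones matrix. A direct expansion, using $J_i^2 = |X_i| J_i$, gives $Q_i^2 = I_i$, so $Q$ is symmetric and an involution.

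Next I would verify $QAQ = A'$ one block at a time. For an $X_i\times X_j$ block $B_{ij}$ (with $i,j \leq \ell$), condition~2 implies $B_{ij}\mathbf{1}_j = s_{ij}\mathbf{1}_i$ and, by symmetry of $A$ applied to $B_{ji}=B_{ij}^T$, also $\mathbf{1}_i^T B_{ij} = s_{ji}\mathbf{1}_j^T$, where $|X_i|s_{ij} = |X_j|s_{ji}$ by counting edges. Expanding
\[
Q_i B_{ij} Q_j = \tfrac{4}{|X_i||X_j|} J_i B_{ij} J_j - \tfrac{2}{|X_i|} J_i B_{ij} - \tfrac{2}{|X_j|} B_{ij} J_j + B_{ij},
\]
and substituting $B_{ij}J_j = s_{ij} J_{ij}$, $J_i B_{ij} = s_{ji} J_{ij}$, and $J_i B_{ij} J_j = |X_j| s_{ji} J_{ij}$ (with $J_{ij}$ the $|X_i|\times|X_j|$ all-ones matrix), the first three terms collapse to $\bigl(\tfrac{2s_{ji}}{|X_i|} - \tfrac{2s_{ij}}{|X_j|}\bigr) J_{ij}$, which vanishes by the compatibility identity. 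Hence $Q_i B_{ij} Q_j = B_{ij}$, exactly as in $A'$, since the switch does not alter edges inside $\bigcup_i X_i$. For an $X_i\times Y$ block, each column is the indicator of the neighborhood of some $y\in Y$ in $X_i$; by hypothesis~1 this indicator has $0$, $\tfrac{1}{2}|X_i|$, or $|X_i|$ ones. One checks that $Q_i$ fixes any $\{0,1\}$-vector of weight $0$ or $|X_i|$ and sends a weight-$\tfrac{1}{2}|X_i|$ indicator to its $\{0,1\}$-complement, which is precisely the switching rule. Finally, the $Y\times Y$ block is unchanged by the construction and left invariant by the identity block of $Q$.

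The main obstacle is the $X_i\times X_j$ computation: one must keep careful track of which row/column sums correspond to $s_{ij}$ versus $s_{ji}$ and invoke the edge-counting identity at exactly the right place for the cross terms to cancel. The $X_i\times Y$ check is conceptually the heart of the statement but reduces to the three-case analysis on column weights. Once all four block identities are in hand, $QAQ = A'$ with $Q^{-1}=Q$ yields cospectrality.
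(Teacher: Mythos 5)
Your proof is correct and complete: the block-diagonal involution $Q=\mathrm{diag}(Q_1,\ldots,Q_\ell,I_{|Y|})$ with $Q_i=\tfrac{2}{|X_i|}J_i-I_i$, the cancellation of the cross terms via $|X_i|s_{ij}=|X_j|s_{ji}$, and the three-case check on the $X_i\times Y$ columns together give $QAQ=A'$ with $Q^2=I$, hence cospectrality. The paper itself states this theorem without proof, citing Godsil and McKay \cite{MR730486}; your argument is essentially the original one from that reference, so there is nothing to reconcile.
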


In the above process, the node partition $\{ X_1, \ldots, X_{\ell}, Y\}$ is called a {\it Godsil-McKay switching partition}.


In the following, we construct a 
$3$-regular bipartite graph $G$ with girth six. We then use Theorem~\ref{Th5} to convert $G$ to $G'$, such that $G'$ is also $3$-regular and bipartite, and $G$ and $G'$ are cospectral. 
In our construction of $G$, we use the Heawood graph \cite{MR0411988}, shown in Fig. \ref{graphBXX}. (In the rest of the paper, to make the identification of the nodes on each side of the bipartition easier, we sometimes use black and white colors to distinguish them.) The Heawood graph is a $3$-regular bipartite graph with girth six.

\begin{figure}[ht]
\begin{center}
\includegraphics[scale=.3]{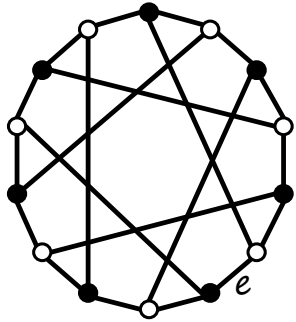}
\caption{The Heawood graph.
} \label{graphBXX}
\end{center}
\end{figure}

\underline{Construction of $G$}:
Consider two disjoint cycles of length $6$ and $18$ with  nodes $ d_1, d_2, \ldots, d_6$, and $a_1, a_2, \ldots, a_6, b_1, \ldots, b_6, c_1, \ldots, c_6$, 
respectively. Add to the graph twelve nodes $v_1,u_1,\ldots,v_6, u_6$, and for each $i\in \{1,2,\ldots,6\}$, 
add the edges $v_id_i, v_ia_i, u_ib_i, u_ic_i$. Further add to the graph nodes $v',v'',u',u''$, and edges $v'v_1, v'v_3, v'v_5, v''v_2, v''v_4, v''v_6, u'u_3, u'u_5,  u''u_4, u''u_6, u'u''$. Now, add a copy of the Heawood graph, and remove one of its circumferential edges such as $e=zz'$ (see Fig. \ref{graphBXX}). Finally, add the edges $u_1z$ and $u_2z'$ to the graph. The resulting graph is a $3$-regular  bipartite graph with girth six. We call this graph $G$. See Fig. \ref{graphB3}.

\begin{figure}[ht]
\begin{center}
\includegraphics[scale=.30]{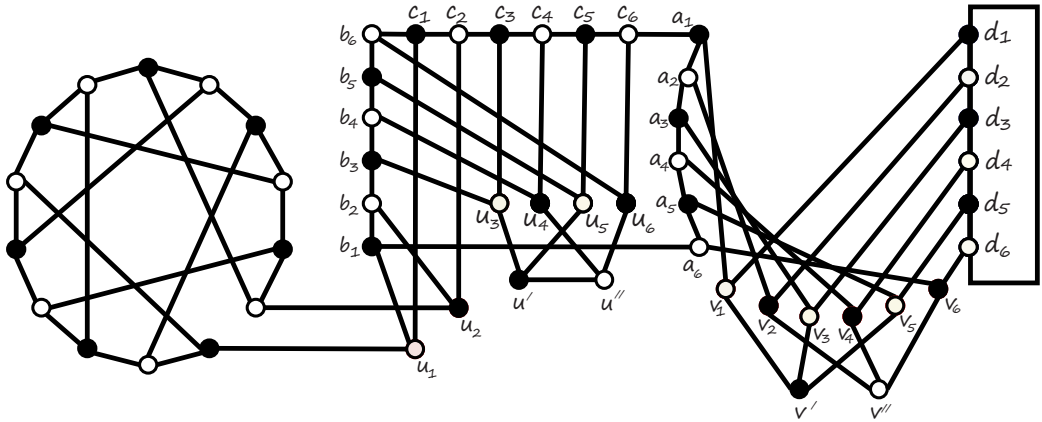}
\caption{The $3$-regular bipartite graph $G$.
} \label{graphB3}
\end{center}
\end{figure}

\underline{Construction of $G'$}: We use Theorem \ref{Th5}, and construct $G'$ from $G$. Let $\ell=6$, and for each $i$, $1\leq i \leq 6$, let $X_i=\{ a_i,b_i,c_i,d_i\}$.
Also, Let $Y= V(G)\setminus \cup_{i=1}^6 X_i$.  It can be seen that for every node $y\in Y$, and every $i\in \{1,\ldots,6\}$, node $y$ has either $0$  or $2$ neighbors
in $X_i$ (note that for each $i$, $|X_i|=4$). Also, for each pair $i,j\in \{1,\ldots,\ell\}$, all the nodes in $X_i$ have the same number of
neighbors in $X_j$ (see Table \ref{V2T1}). 
Consequently, the partitioning has all the properties of Theorem \ref{Th5}. We can thus apply Godsil-McKay switching, and obtain $G'$. (See Fig.~\ref{graphB3N}.)

\begin{figure}[ht]
	\begin{center}
		\includegraphics[scale=.30]{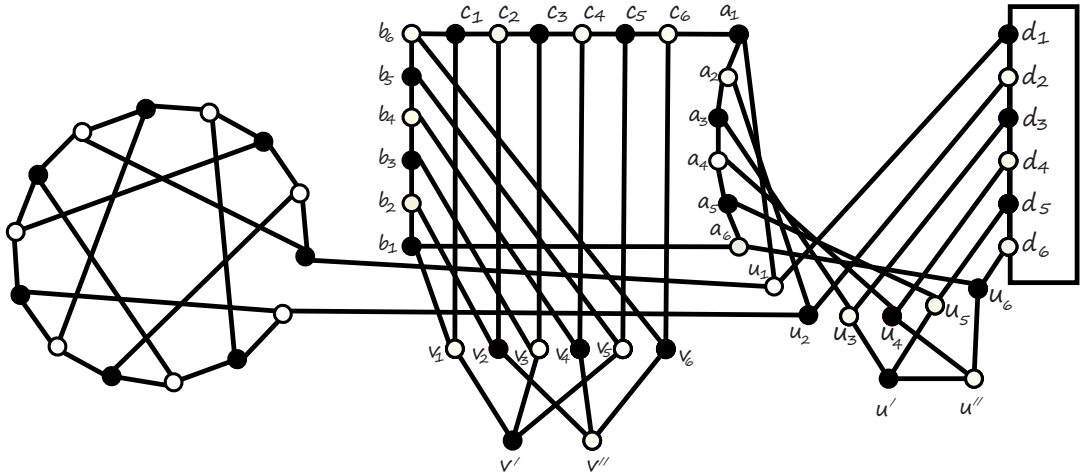}
		\caption{The $3$-regular bipartite graph $G'$, obtained by the application of Godsil-McKay switching to the graph $G$ in Fig.~\ref{graphB3}.
		} \label{graphB3N}
	\end{center}
\end{figure}

\begin{table}[ht]
\caption{The $(i,j)$ entry of the table shows the number of neighbors that an arbitrary node $v \in X_i$ has in the set $X_j$ (for Godsil-McKay switching partition of $G$ shown in Fig.~\ref{graphB3}).}
\begin{center}
\scalebox{1}{
\begin{tabular}{ |c||c|c|c|c|c|c| }
\hline
                               &  $X_1$  & $X_2$  & $X_3$ &  $X_4$  & $X_5$ & $X_6$    \\
\hline
\hline
$X_1$                          & 0       &1       & 0     & 0       &0     & 1       \\
\hline
$X_2$                          & 1       &0       & 1     & 0       &0     & 0       \\
\hline
$X_3$                          & 0       &1       & 0     & 1       &0     & 0       \\
\hline
$X_4$                          & 0       &0      & 1     &  0       &1     & 0       \\
\hline
$X_5$                          & 0       &0       & 0    &1         &0     & 1       \\
\hline
$X_6$                          & 1       &0       & 0     &  0      &1    & 0      \\
\hline
\end{tabular}
}
\end{center}
\label{V2T1}
\end{table}

Both $G$ and $G'$ are $3$-regular bipartite graphs and based on Theorem~\ref{Th5}, both have the same spectrum. 
In Table \ref{TT1}, we have listed the number of cycles of length $6$ up to $22$, for both graphs.\footnote{The cycles are counted using a Matlab program by Jeff Howbert \cite{Jeff}. This program counts all cycles in a simple undirected graph up to a specified size limit, using a backtracking algorithm.}
As expected from the results presented in \cite{eigenvalue}, both graphs have the same cycle distribution for cycles of length up to $2g-2=10$. 
From the table, however, it can be seen that the multiplicities of cycles of length $2g=12$ and larger are different in these graphs.

\begin{table}[ht]
\caption{Multiplicities of cycles of length $6$ up to $22$ in $G$ and $G'$}
\begin{center}
\scalebox{1}{
\begin{tabular}{ |c ||c|c|c|c|c|c|c|c|c|  }
\hline
Graph&  6-cycles &  8-cycles &  10-cycles &  12-cycles &  14-cycles &  16-cycles &  18-cycles &  20-cycles &  22-cycles \\ \hline
$G$  &  51       &  54       &  186       &  212       &  460       &  659       &  1609      &  4038      &  11132  \\ \hline
$G'$ &  51       &  54       &  186       &  213       &  458       &  669       &  1576      &  4090      &  10977 \\
\hline
\end{tabular}
}
\end{center}
\label{TT1}
\end{table}

Although, the example just provided was for regular bipartite graphs, one can use the following theorem  to construct cospectral $(d_u,d_w)$-regular bipartite graphs with $d_u \neq d_w$, whose $i$-cycle multiplicities are different for $i \geq 2g$. 


\begin{theo}\label{V2Th2}
Let $G$ be a bi-regular bipartite graph, and suppose that Godsil-McKay switching is used to convert $G$ into $G'$. Then, the graph $G'$ is also bi-regular and both graphs have the same degree sequences.
\end{theo}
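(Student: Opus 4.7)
The plan is to show that every node of $G$ keeps its degree under the switching; since $G$ is bi-regular bipartite, this will immediately yield that $G'$ is bi-regular bipartite with the same degree sequences. Write $G = (U \cup W, E)$ with uniform degrees $d_u$ on $U$ and $d_w$ on $W$, and note that the switching only modifies edges between $Y$ and the various $X_i$'s.

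The case $v \in Y$ is immediate: for every $X_i$ in which $v$ has $|X_i|/2$ neighbors, the switching removes exactly those $|X_i|/2$ edges and inserts $|X_i|/2$ new edges between $v$ and the complementary half of $X_i$; for every $X_i$ in which $v$ has $0$ or $|X_i|$ neighbors, no incident edge is touched. Hence $d_{G'}(v) = d_G(v)$. The same analysis shows that for each active $X_i$ (one in which some $y \in Y$ has $|X_i|/2$ neighbors), $X_i$ must lie wholly in $U$ or wholly in $W$: if $y \in U$, its original neighbors in $X_i$ lie in $W \cap X_i$, while its post-switching neighbors in $X_i$ include all of $X_i \cap U$, so the latter must be empty for $G'$ to inherit the bipartition. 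Consequently the switching only swaps edges between opposite sides of the bipartition, and $G'$ remains bipartite with parts $U$ and $W$.

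The main computation is for $v \in X_i$. Split $d_G(v) = d_G(v, X_i) + \sum_{j \neq i} d_G(v, X_j) + d_G(v, Y)$; the first two summands are unchanged by the switching, and by the Godsil-McKay hypothesis they are the same for every $v \in X_i$. Since $G$ is bi-regular, $d_G(v)$ is the same for every $v \in X_i$, so $d_G(v, Y) =: k_i$ is independent of $v \in X_i$. Partition $Y = Y_0^i \cup Y_h^i \cup Y_f^i$ according to whether a node has $0$, $|X_i|/2$, or $|X_i|$ neighbors in $X_i$, and set $m_v^i = |N_G(v) \cap Y_h^i|$. Then $d_G(v, Y) = |Y_f^i| + m_v^i = k_i$, so $m_v^i$ equals a common constant $m^i$ on $X_i$. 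Double-counting edges between $X_i$ and $Y_h^i$ gives $|X_i|\, m^i = (|X_i|/2)\, |Y_h^i|$, so $m^i = |Y_h^i|/2$. After the switch, $d_{G'}(v, Y) = |Y_f^i| + (|Y_h^i| - m^i) = |Y_f^i| + m^i = k_i$, so $d_{G'}(v) = d_G(v)$. I expect the key obstacle to be precisely this step, namely forcing $m_v^i$ to be the same for every $v \in X_i$: without this uniformity, individual node degrees could change even though the total edge count between $X_i$ and $Y$ is preserved, and it is exactly the combination of bi-regularity of $G$ with the Godsil-McKay condition on the neighbor counts in each $X_j$ that supplies it.
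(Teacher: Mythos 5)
There is a genuine gap: your argument silently assumes that every cell $X_i$ that is ``active'' in the switching lies entirely in $U$ or entirely in $W$, and your justification of this assumes that $G'$ is bipartite \emph{with the same parts} $U$ and $W$. That is not a hypothesis of the theorem, and it fails in general: a cell $X_i$ may meet both sides, and after the switching its nodes must be reassigned to the \emph{opposite} sides to recover a bipartition. Concretely, take $d_u=2$, $d_w=3$, $X_i=\{u_1,u_2,w_1,w_2\}$ with $u_1,u_2\in U$, $w_1,w_2\in W$, internal edges $u_1w_1,u_2w_2$, a node $y_1\in Y\cap W$ adjacent to $u_1,u_2$, and nodes $y_2,y_3\in Y\cap U$ each adjacent to $w_1,w_2$ (this extends to a valid $(2,3)$-regular graph with a valid Godsil--McKay partition, since every outside node has $0$ or $2$ neighbors in $X_i$). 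After switching, $u_1,u_2$ have degree $3$ and $w_1,w_2$ have degree $2$: individual degrees are \emph{not} preserved, which defeats your stated plan, even though the degree sequence is (because equally many nodes trade $d_u$ for $d_w$ as vice versa). The same example breaks your key uniformity step: ``since $G$ is bi-regular, $d_G(v)$ is the same for every $v\in X_i$'' is false for a mixed cell when $d_u\neq d_w$ (here $|Y_h^i|=3$ is odd and $m^i=3/2$ is only an average, with $m_{u_1}^i=1\neq 2=m_{w_1}^i$).

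Your computation is correct, and essentially matches the paper's, for cells contained in one side of the bipartition and for nodes of $Y$; the missing content is precisely the treatment of mixed cells. The paper's proof classifies cells into types (wholly in $U$, wholly in $W$, mixed with equal halves, mixed with unequal halves), proves structural lemmas about which types can be adjacent, shows that mixed cells with unequal halves cannot occur when $d_u\neq d_w$ (and are inert when $d_u=d_w$), and for mixed cells with equal halves shows that the switching swaps the degrees $d_u\leftrightarrow d_w$ between the two equal halves, so the degree sequence survives provided one also flips those nodes across the bipartition. To repair your proof you would need to add this case analysis; the local double-counting alone does not suffice.
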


\begin{proof}
For the proof, we first discuss some of the properties of a Godsil-McKay switching partition of a bi-regular bipartite graph.
Let $G=(U\cup W, E)$ be a   bi-regular graph in which all the nodes in $U$ have the same degree $d_u$ and all the nodes in $W$ have the same degree $d_w$.
Let $\{ X_1, \ldots, X_{\ell}, Y\}$ be a  Godsil-McKay switching partition for the nodes of $G$.
For each $i$, we say that the set of nodes $X_i$ is of Type 1 (Type 2), if all nodes of $X_i$ are in $U$ ($W$). 
Otherwise, we say that $X_i$ is of Type 3 (if some nodes of $X_i$ are in $U$ and some others are in $W$). 
Let $X_i$ be a set of Type 3. Partition $X_i$ into two parts $X_i^1$ and $X_i^2$, where $X_i^1$ is the subset of nodes of $X_i$ that are in $U$, and thus $X_i^2$ contains the nodes of $X_i$ that are in $W$. 
Therefore, $|X_i|=|X_i^1|+|X_i^2|$. If $X_i$ is of Type 3, we say it is of Type 3.1, if $|X_i^1|=|X_i^2|$. Otherwise, we say that $X_i$ is of Type 3.2. We then have the following properties for partition sets of different types.

\begin{lem}
	Any Godsil-McKay switching partition $\{ X_1, \ldots, X_{\ell}, Y\}$ of the nodes of a $(d_u,d_w)$-regular bipartite graph $G=(U\cup W, E)$ has the following properties: 
	\begin{itemize}
		\item[P1.] There is no connection (edge) between the nodes of a Type-3 set and the nodes of a Type-1 or Type-2 set.
		\item[P2.] There is no connection between the nodes of a Type-3.1 set and the nodes of a Type-3.2 set.
		\item[P3.] Let $ X_i$ and $X_j$ be two sets of Type 3.2.  Assume that the nodes in $ X_i$ have at least one neighbor in $ X_j$. Then, if $|X_i^1|>|X_i^2|$, we have $|X_j^1|<|X_j^2|$, and if $|X_i^1|<|X_i^2|$, we have $|X_j^1|>|X_j^2|$.
		\item[P4.] Let $X_i$ be a set of Type 3. If a node $y\in Y$ has a neighbor in $X_i$, then $y$ is adjacent with $\frac{|X_i|}{2}$ nodes of $X_i^1$ or $y$ is adjacent with
		$\frac{|X_i|}{2}$ nodes of $X_i^2$ ($y$ cannot have neighbors in both $X_i^1$ and $X_i^2$).
		\item[P5.] For each $X_i$, each node in $X_i$ is connected to the same number of nodes in $\cup_{j=1}^{\ell} X_j$.
	\end{itemize}
\end{lem}
\begin{proof}
	P1. Let $ X_i$ be a set of Type 3 and  $ X_j$ be a set of Type 2 or Type 1. Since the graph is bipartite, some of the nodes of $X_i$ cannot have any connection to the nodes of $X_j$.
	Moreover, all the nodes of $X_i$ must have the same number of neighbors in $X_j$. This number thus must be zero. P2.  Let $ X_i$ be a set of Type 3.2 and  $ X_j$ be a set of Type 3.1.
	Let $|E'|$ be the number of edges between $X_i^1$ and $X_j^2$, and $|E''|$ be the number of edges between $X_i^2$ and $X_j^1$.  Since $ X_j$ is a set of Type 3.1, and every node in $X_j$ has the same number of neighbors in $X_i$,
	we have $|E'|=|E''|$. On the other hand, since  $X_i$ is a set of Type 3.2 and every node in $X_i$ has the same number of neighbors in $X_j$, we have $|E'|\neq|E''|$,  which is a contradiction. The proofs for P3-P5 are straightforward.
\end{proof}

We now prove Theorem~\ref{V2Th2}.
Consider the application of the Godsil-McKay switching to convert a $(d_u,d_w)$-regular bipartite graph $G=(U\cup W, E)$ into the graph $G'$. 
The nodes of $G'$ can be partitioned into two sets $U'$ and $W'$ according to the following rules:\\
\underline{\bf Rule 1.} For each node $v\in Y$ in the graph $G$, assign the corresponding node $v$ in $G'$ to $U'$ ($W'$) if $v$ in $G$ is in $U$ ($W$).\\
\underline{\bf Rule 2.} For each $i$, if $X_i$ is of Type 1 or Type 2, then for each node $v$ in  $X_i$ in the graph $G$, assign the corresponding node $v$ in $G'$ to $U'$ ($W'$) if $v$ in $G$ is in $U$ ($W$).\\
\underline{\bf Rule 3.} For each $i$, if $X_i$ is of Type 3, then for each node $v$ in  $X_i$ in the graph $G$, assign the corresponding node $v$ in $G'$ to $U'$ ($W'$) if $v$ in $G$ is in $ W$ ($U$).

Now, we show that $G'=(U'\cup W', E')$ is a bi-regular graph in which all the nodes in
$U'$ have the same degree $d_u$ and all the nodes in $W'$ have the same degree $d_w$. To show this, we examine the degrees of different partition sets $Y$ and $X_i$'s. For the latter sets, 
the examination is based on the type of the set. 

(i) Set $Y$: It is clear that the Godsil-McKay switching does not change the degree of any node in $Y$, and by Rule 1, those nodes in $Y$ with degree $d_u$ ($d_w$) are in $U'$ ($W'$).

(ii) Type-1 or Type-2 $X_i$: Let $X_i$ be a set of Type 1. If there is no node  $y \in Y$ such that $y$ is adjacent with $\frac{|X_i|}{2}$ nodes of $X_i$, then the Godsil-McKay switching does not change the degree of any node in $X_i$. 
Now, assume that there is a node  $y\in Y$ such that $y$ is adjacent to $\frac{|X_i|}{2}$ nodes of $X_i$. Let $Y_i \subset Y$ be a subset of nodes such that for each node $y\in Y_i$, the node $y$ is adjacent to $\frac{|X_i|}{2}$ nodes of $X_i$. 
Considering that all nodes in $X_i$ have the same degree $d_u$, by using P5, we conclude that
all the nodes in $X_i$ have the same number of neighbors in $Y_i$. Call this number $\gamma$. By counting the number of edges $\eta$ between $X_i$ and $Y_i$, we find that $\eta=\gamma|X_i|$. 
On the other hand, $\eta = |Y_i| |X_i|/2$.  
Thus, $|Y_i|=2\gamma$. This implies that each node $v \in X_i$ is adjacent to half of the nodes in $Y_i$ ($\gamma$ of them), and has no connection to the other half. 
Consequently, the Godsil-McKay switching does not change the degree of any node in $X_i$. This together with Rule 2 shows that each node in any Type-1 set in $U'$ has degree $d_u$.
The proof for a Type-2 set is similar.

(iii) Let $X_i$ be a set of Type 3.1. By P1 and P2, all the connections to $X_i$ are from $Y$ and Type-3.1 sets. Based on P4, after applying the Godsil-McKay switching, the degree of each node in $X_i^1$ will be $d_w$ and  the degree of each node in $X_i^2$ will be $d_u$. Thus, by Rule 3, and the fact that $|X_i^1|=|X_i^2|$, the degree sequence of the graph does not change after switching. Moreover, it is easy to see that after the application of the switching to $X_i$, the graph still remains bipartite.

(iv) Let $X_i$ be a set of Type 3.2. We consider two cases:\\
{\bf Case 1.} Without loss of generality, assume that $d_w>d_u$. In this case, by P4, P5, and the condition $d_w>d_u$, there must be a  node  $y'\in Y$ such that  $y'$ is adjacent with $\frac{|X_i|}{2}$ nodes of $X_i^2$. Thus, $|X_i^2| \geq \frac{|X_i|}{2}$. This together with the definition of Type 3.2 sets, i.e.,  $|X_i^1| \neq |X_i^2|$, result in
\begin{equation}\label{V6E1}
|X_i^1|<|X_i^2|\:.
\end{equation}
Note that (\ref{V6E1}) is valid for any set $X_i$  of Type 3.2.
On the other hand, the set $X_i^1$ contains at least one node $v$. By P1 and P2, none of the $d_u$ connections of $v$ can be to any node in Type 1, Type 2 or Type 3.1 sets. The connections cannot be to the nodes in $X_i^2$ either, because this implies, by the condition of Godsil-McKay partitioning, that every node in $X_i$ must also be connected to $d_u$ other nodes in $X_i$. This however, is not possible because it would imply that there must be $|X_i^2| \times d_u$ connections from $X_i^2$ to $X_i^1$, which, by (\ref{V6E1}), is more than the total number of edges connected to $X_i^1$, i.e., $|X_i^1| \times d_u$. We thus conclude that there is at least a set $X_j$ of Type 3.2 such that each node of $X_i$ has at least one neighbor in $X_j$, and by P3, $|X_j^1|>|X_j^2|$.
But this contradicts (\ref{V6E1}). We thus come to the conclusion that this case cannot happen. \\
{\bf Case 2.} Now, assume that $d_w= d_u$. If there are two nodes $y$ and $y'$ in $Y$ such that $y$ is adjacent to
$\frac{|X_i|}{2}$ nodes of $X_i^1$ and $y'$ is adjacent to $\frac{|X_i|}{2}$ nodes of $X_i^2$, then $\frac{|X_i|}{2} \leq |X_i^1|$ and $\frac{|X_i|}{2} \leq |X_i^2|$. 
This implies $|X_i^1|=|X_i^2|=\dfrac{|X_i|}{2}$. But this contradicts the definition of Type 3.2 sets. 
Also, if there is a node $y\in Y$ such that $y$ is adjacent to $\frac{|X_i|}{2}$ nodes of $X_i^1$ (or $X_i^2$), but there is no node $y'\in Y$ such that $y'$ is adjacent to
$\frac{|X_i|}{2}$ nodes of $X_i^2$ (or $X_i^1$), then by P5, we have $d_u \neq d_w$, again a contradiction. Thus, there is no connection between the nodes in  $Y$ and those of $X_i$.
Let $S$ be the union of all Type-3.2 sets. Partition $S$ into two sets $S^1$ and $S^2$, where $S^1$ is a subset of $U$ and $S^2$ is a subset of $W$. 
Each node in $S$ has no neighbor in $Y$, Type-1, Type-2, or Type-3.1 sets. Now, consider the node-induced subgraph  on the set of nodes $S$. 
Since the degree of all nodes in $G$ are the same, by counting the number of edges from two sides, we have 
$|S^1| =|S^2|$. This combined with $d_u=d_w$, and Rule 3 shows that the Godsil-McKay switching does not change the degree sequence of $G$. The graph also remains bipartite. This completes the proof.
\end{proof}

\begin{rem}
	From the discussions above, one can see that a Godsil-McKay switching partition of bi-regular bipartite graphs, in which degrees of the two sides are unequal, cannot have Type 3.2 sets. Thus, for practical Tanner graphs in which $d_u \neq d_w$, 
	a valid Godsil-McKay switching partition $\{Y, X_1, \ldots, X_\ell\}$ of the nodes can only have $X_i$'s that are either Type 1, Type 2 or Type 3.1. There can also be connections only between Types 1 and 2, and between Types 3.1 and 3.1. 
	Nodes in $Y$ can be connected to the nodes in all three types of $X_i$ sets. 	
\end{rem}

It is important to note that, in general, Godsil-McKay switching does not preserve the degree sequences of a graph. 
As an example, consider the half-regular bipartite graph $G$ shown in Fig. \ref{V3G1}(a). Let $\ell=3$, and choose $X_1=\{v_1,v_2,v_3,v_4 \}$, $X_2=\{u_1,u_2,u_3,u_4 \}$, $X_3=\{z_1,z_2,z_3,z_4 \}$ and
$Y=\{ x_1,x_2,x_3 \}$. 
The partitioning has all necessary properties of Theorem \ref{Th5}. 
We can thus apply Godsil-McKay switching. By applying the switching, we obtain the graph $G'$, given in Fig. \ref{V3G1}(b). 
One can see that although $G'$ is also half-regular with the same degree of two on the regular side, the degree sequence of the two graphs differ on the irregular side.


\begin{figure}[ht]
\begin{center}
\includegraphics[scale=.5]{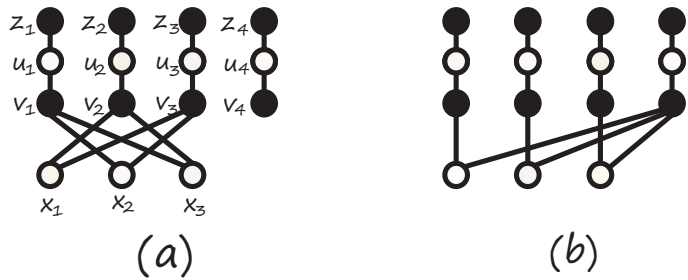}
\caption{Two cospectral graphs: (a) $G$ and (b) $G'$.}
\label{V3G1}
\end{center}
\end{figure}

\section{Computing the number of cycles in irregular bipartite graphs}
\label{sec35}

In this section, we consider the problem of counting the cycles of different length in irregular bipartite graphs of different girth $g$. First, we demonstrate through counter-examples 
that if $g=4$, the information of degree sequences and spectrum is, in general, insufficient to count $i$-cycles for any $i \geq g+2$. 
Next, for $g \geq 6$, we show by counter-examples that spectrum and degree sequences cannot, in general, uniquely determine the multiplicity of $i$-cycles for any $i \geq g$. 
The results for the case of $g \geq 8$ are provided before those of $g=6$,
since the graphs constructed for the former case are used as building blocks for graph constructions in the latter.

\subsection{$g=4$: Counter-example for $i$-cycles, $i \geq g+2$}
\label{bnh}

In this subsection, we construct two half-regular bipartite graphs such that they both have the same spectrum, degree sequence and girth $4$, but have different number of $i$-cycles for $i \geq 6$.

\underline{Construction of the graph $\mathcal{G}$}:

Consider two disjoint cycles of length $4$ and $12$ with node sets $\{v_1, v_2, v_3, v_4\}$, and $\{u_1, u_2, \ldots, u_{12}\}$, respectively. Add two nodes $w$ and $w'$ to the union of the cycles, and connect both $w$ and $w'$ to the nodes $v_1,v_3,u_1,u_3$. Also, add another node $w''$, and connect it to the nodes $u_5,u_7,u_9,u_{11}$. Finally, add two more nodes $x$ and $y$ to the graph, and connect them to the nodes $v_2,v_4,u_2,u_4,u_6,u_8,u_{10},u_{12}$. 
Call the resultant graph $\mathcal{G}$. The graph $\mathcal{G}$ is bipartite, and has $21$ nodes and its girth is $4$. See Fig. \ref{graphGG2}. Consider the node partition $V(G) = U\cup W$ for  $\mathcal{G}$, where
$W=\{x,y, v_1,v_3,u_1,u_3,u_5,u_7,u_9,u_{11}\}$. We thus have $n=|U|=11$ and $m=|W|=10$. The degree sequence of $W$ is $(8,8,4,4,4,4,3,3,3,3)$, and the degree of each node in $U$ is $4$. 
Thus, $\mathcal{G}$ is variable-regular with variable  degree $4$.

\begin{figure}[ht]
\begin{center}
\includegraphics[scale=.40]{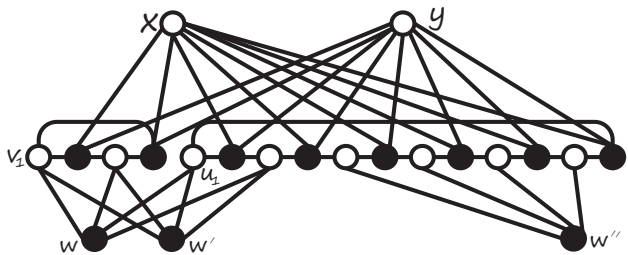}
\caption{Graph $\mathcal{G}$ of Subsection~\ref{bnh}.
} \label{graphGG2}
\end{center}
\end{figure}

\underline{Construction of  $\mathcal{G}'$ from  $\mathcal{G}$}: We use Godsil-McKay switching of Theorem \ref{Th5}. We choose $\ell=2$, $X_1=\{v_i,u_i: i \text{ is odd} \}$ and
$X_2=\{v_i,u_i: i \text{ is even} \}$. Thus, $|X_1|=|X_2|=8$. Also, we select $Y= \{w,w',w'',x,y\}$.  Nodes $w, w'$ and $w''$, each has $4$ neighbors in $X_1$, and no neighbor in $X_2$. 
Also, each of the nodes $x$ and $y$ has $8$ neighbors in $X_2$ and  no neighbor in $X_1$. Moreover, for each pair $i,j \in \{1,2\}$, all nodes in $ X_i$ have the same number of
neighbors in $X_j$. 
The partitioning has all necessary properties of Theorem \ref{Th5}, and thus, we can apply Godsil-McKay switching.
By applying the switching, we obtain the graph $\mathcal{G}'$, which has the same degree sequences as $\mathcal{G}$. In Table \ref{TT3}, we have listed the cycle distribution of both graphs for cycle lengths up to 18. 
One can see that $\mathcal{G}$ and $\mathcal{G}'$, although having the same spectrum, degree sequences and $g=4$, have different number of $i$-cycles for $i=6,8,\ldots,18$.


\begin{table}[ht]
\caption{Multiplicities of cycles of length $4$ up to $18$ in $\mathcal{G}$ and $\mathcal{G}'$, constructed in Subsection~\ref{bnh}}
\begin{center}
\scalebox{1}{
\begin{tabular}{ |c ||c|c|c|c|c|c|c|c|  }
\hline
Graph            &  4-cycles &  6-cycles &  8-cycles &  10-cycles &  12-cycles &  14-cycles &  16-cycles &  18-cycles \\ \hline
$\mathcal{G}$    &  60       &  248      &  1300      &  4056      &  11992     & 29780      & 43040     &  32640  \\ \hline
$\mathcal{G}'$   &  60       &  250      &  1294      &  4026      &  11706     & 28440      & 41656     &  32096\\
\hline
\end{tabular}
}
\end{center}
\label{TT3}
\end{table}

\subsection{$g \geq 8$: Counter-examples for $i$-cycles with $i \geq g$}
\label{subsec22}

In this subsection, we consider irregular bipartite graphs with girth $g$ at least eight, and demonstrate that the information of spectrum, and degree sequences is not sufficient, in general, to determine the multiplicity of $i$-cycles for 
$i \geq g$. For this, in the following, for each  $t \geq 1$,  we first construct two irregular bipartite graphs $G_t$ and $G_t'$ such that they have the same spectrum and degree sequences, but different number of $(6+2t)$-cycles (one vs. zero). The disjoint union of these graphs can then be used to provide counter-examples for cospectral irregular graphs with the same degree sequences and the same girth $g$ (for any girth $g \geq 8$), but with different number of $i$-cycles for any $i \geq g$. (We note that the irregular graph constructed by the disjoint union of $G_t$ graphs, $t \geq \tau$, has girth $6+2 \tau$, while the corresponding disjoint union of $G_t'$ graphs has an infinite girth. To make an example where both graphs have the same girth, one can simply consider the union of the constructed graphs with a cycle of length $6+2 \tau$.)

\begin{figure}[ht]
\begin{center}
\includegraphics[scale=.38]{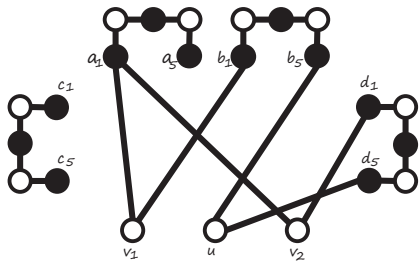}
\caption{Graph $G_4$ constructed in Subsection~\ref{subsec22}.
} \label{graphGG3}
\end{center}
\end{figure}

\underline{Construction of the graph $G_t$}:
Consider the integer $t\geq 1$, and four paths, each of length $t$, with the node sets $\{a_1, \ldots, a_{t+1}\}$, $\{b_1, \ldots, b_{t+1}\}$, $\{c_1, \ldots, c_{t+1}\}$, and $\{d_1, \ldots, d_{t+1}\}$, respectively. 
Then, add three nodes $v_1,v_2$ and $u$, and the edges $v_1a_1$, $v_1b_1$, $v_2a_1$, $v_2d_1$, $ud_{t+1}$ and $u b_{t+1}$, to the graph.
Call the resultant bipartite graph $G_t$.  As an example, the graph $G_4$ is shown in  Fig. \ref{graphGG3}. The graph $G_t$ has $4t+7$ nodes and 
only one cycle of length $6+2t$. From $4t+7$ nodes, $4t+3$ are of degree $2$, one node has degree $3$ and three nodes have degree $1$.

\underline{Constructing $G_t'$ from $G_t$}: We use Godsil-McKay switching of Theorem \ref{Th5} with $\ell=t+1$, and for each $i$, $1\leq i \leq t+1$, we select $X_i=\{ a_i,b_i,c_i,d_i\}$.
We thus have $Y= \{v_1,v_2,u\}$.  It can be seen that, for every node $y\in Y$, and every $i\in \{1,\ldots,t+1\}$, the node $y$ has either $0$  or $2$ neighbors
in $X_i$ (note that for each $i$, $|X_i|=4$). Also, for each pair $i,j\in \{1,\ldots, t+1\}$, all the nodes in $X_i$ have the same number of
neighbors in $X_j$. 
The $(i,j)$ entry of the following matrix shows the number of neighbors that an arbitrary node $v\in X_i$ has in the set $X_j$:
\[\begin{bmatrix}
0     &1     & 0   & 0  &    \cdots  &  0 &  0 &  0  \\
1     &0     & 1   & 0  &    \cdots  &  0 &  0 &  0  \\
0     &1     & 0   & 1  &    \cdots  &  0 &  0 &  0  \\
\vdots&\vdots&\vdots&\vdots&  \ddots &\vdots&\vdots&\vdots \\
0     &0     & 0   & 0  &    \cdots  &  1 &  0 &  1  \\
0     &0     & 0   & 0  &    \cdots  &  0 &  1 &  0
\end{bmatrix}\:.\]
This matrix  shows that for each pair $i,j\in \{1,\ldots,\ell\}$, all the nodes in $X_i$ have the same number of neighbors in $X_j$. 
Thus, the node partitioning has all the necessary properties of Theorem \ref{Th5} for the application of Godsil-McKay switching.
By applying the switching, we obtain the graph $G_t'$. For example, corresponding to $G_4$ in Fig.~\ref{graphGG3}, we obtain  the graph $G_4'$, shown in Fig. \ref{graphGG4}.
The graph $G_t'$ has the same spectrum and degree sequences as $G_t$, but does not have any cycle.

\begin{figure}[ht]
\begin{center}
\includegraphics[scale=.35]{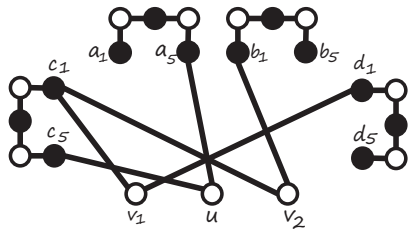}
\caption{Graph $G_4'$, obtained by Godsil-McKay switching from $G_4$, shown in Fig.~\ref{graphGG3}.
} \label{graphGG4}
\end{center}
\end{figure}

\begin{rem}\label{R2}
Consider the graph $G_t$, where $t$ is an even number. Consider the partition $V(G_t) = U\cup W$ for the nodes of $G_t$, where
$U=\{v_1,v_2,u \}\cup \{ a_i,b_i,c_i,d_i : i \text{ is even}\}$. The graphs $G_t$ and $G_t'$ are variable-regular with variable degree $2$. 
We thus conclude that the number of $g$-cycles in half-regular bipartite graphs cannot be, in general, computed using the spectrum and the degree sequences of the graph 
when the girth of the variable-regular bipartite graph is $g=6+2t$, where $t \geq 2$ is an even number.
\end{rem}

\subsection{$g =6$: Counter-examples for $i$-cycles with $i \geq g$}
\label{subsec33}

In this subsection, we first provide a counter-example of two cospectral irregular graphs with similar degree sequences and $g=6$, but different $N_6$. In Part \ref{subsub123}, we then construct two irregular bipartite graphs $G_{t,k}$ and $G_{t,k}'$ with girth $6$, such that they both have the same spectrum and degree sequences, but different multiplicity for $i$-cycles with $i \geq g+2$. 

\subsubsection{Counter-example for $6$-cycles}

Consider  the disjoint union of two $6$-cycles and two paths, each of length $5$, and call it $G_1$ (i.e., $G_1=2C_6 \cup 2P_6$). Also, consider the disjoint union of a $6$-cycle, a $14$-cycle and two paths, each of length one, and call it $G_2$ (i.e., $G_2=C_6 \cup C_{14} \cup 2 P_2$). It is easy to see that $G_1$ and $G_2$ are irregular bipartite graphs with the same degree sequences (both have ten nodes with degree $2$ and two nodes with degree $1$ on each side of the bipartition). Using (\ref{path}) and (\ref{cycle}), one can also see that $G_1$ and $G_2$  are cospectral. The girth of both graphs is six, but they have different number of $6$-cycles (two vs. one).

\subsubsection{Counter-example for $i$-cycles with $i \geq g+2$}
\label{subsub123}
\underline{Construction of the graph $G_{t,k}$}:
Let $t$ and $k$ be two integers such that $t > k \geq 0$, and $t+k$ is an even number. Consider the graph $G_t$ which was constructed in Subsection \ref{subsec22}. 
Add a path of length $k$ with the node set $\{f_1, \ldots, f_{k+1}\}$, as well as the edges $uf_1$ and $ v_2 f_{k+1}$ to $G_t$. Call the resultant graph $G_{t,k}$. As an example, the graph $G_{4,2}$ is shown in Fig. \ref{graphGG66}(a). 
The graph $G_{t,k}$ has $4t+k+8$ nodes, out of which, $4t+k+2$ nodes have degree $2$, three have degree $3$, and three have degree $1$. 
The graph is also bipartite and has one cycle of length $t+k+4$, one cycle of length $t+k+6$ and one cycle of length $6+2t$. 

\begin{figure}[ht]
\begin{center}
\includegraphics[scale=.35]{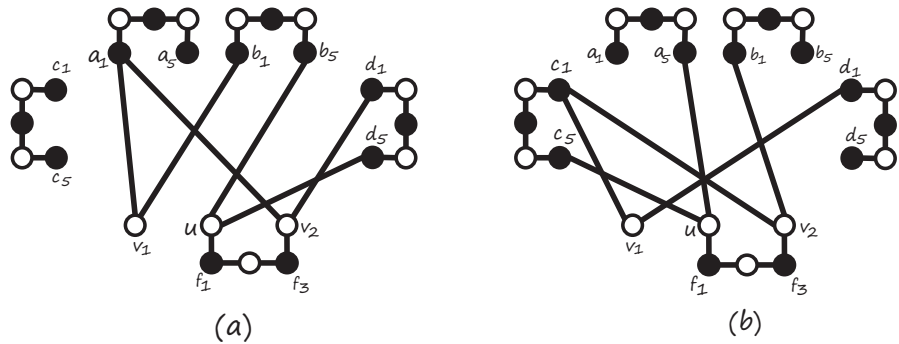}
\caption{Graphs (a) $G_{4,2}$ and (b) $G_{4,2}'$. 
} \label{graphGG66}
\end{center}
\end{figure}

\underline{Construction of $G_{t,k}'$ from $G_{t,k}$}: We use Godsil-McKay switching to transform $G_{t,k}$ into $G_{t,k}'$. Let $\ell=t+1$, and for each $i$, $1\leq i \leq t+1$, let $X_i=\{ a_i,b_i,c_i,d_i\}$.
We then have $Y= \{v_1,v_2,u,f_1,f_2,\ldots ,f_{k+1}\}$. All the conditions of Theorem~\ref{Th5} apply to this partition. We call the graph obtained by the switching $G_{t,k}'$.
The graph $G_{t,k}'$ can also be generated directly from $G_t'$, the Godsil-McKay switched version of $G_t$: 
Add to $G_t'$ a path of length $k$ with the node set $\{f_1, \ldots, f_{k+1}\}$, and the edges $uf_1$ and $ v_2 f_{k+1}$.  
As an example, in Fig. \ref{graphGG66}(b), the graph $G_{4,2}'$ is shown. The graph $G_{t,k}'$ has the same spectrum and degree sequences as $G_{t,k}$, but has only one cycle of length $t+k+4$.  
Now, for a fixed $i\geq 1$, consider the disjoint union of graphs $G_{2,0}, G_{3,1}, \ldots, G_{i+2,i}$, and call it $D_i$. Also, use  $D_i'$ to denote the disjoint union of graphs $G_{2,0}', G_{3,1}', \ldots, G_{i+2,i}'$.
For each $j \in \{0, \ldots, i\}$, the graph $G_{j+2,j}$ has one $(2j+6)$-cycle, one $(2j+8)$-cycle and one $(2j+10)$-cycle. Also, the graph $G_{j+2,j}'$ has only one cycle of length $2j+6$. 
Considering that the spectrum of a disconnected graph is the disjoint union  of the spectra of its components, one can see that $D_i$ and $D_i'$ are cospectral. They also have the same degree sequences and girth $g=6$.
It can however, be seen that while both graphs have only one cycle of length $6$, they have different number of $k$-cycles for each $6< k \leq 2i+10$. 
As an example, the cycle distributions of $D_3$ and $D_3'$ are given in Table \ref{V7T1}. 

\begin{table}[ht]
\caption{Multiplicities of cycles of length $6$ up to $16$ in Graphs $D_3$ and $D_3'$}
\begin{center}
\scalebox{1}{
\begin{tabular}{ |c ||c|c|c|c|c|c|c|   }
\hline
Graph    &     6-cycles &  8-cycles  &  10-cycles &  12-cycles &  14-cycles &  16-cycles \\ \hline
$D_3$    &     1        &  2         &  3         &  3         & 2          & 1      \\ \hline
$D_3'$   &     1        &  1         &  1         &  1         & 0          & 0             \\
\hline
\end{tabular}
}
\end{center}
\label{V7T1}
\end{table}

\section{Counting cycles in half-regular bipartite graphs}
\label{sec36}

The counter-example constructed in Subsection~\ref{bnh} for $g=4$ was based on half-regular bipartite graphs. We thus know that if $g=4$, the knowledge of spectrum and degree sequences is not sufficient in general to count the number of $i$-cycles for $i \geq g+2$ in half-regular bipartite graphs. On the other hand, the positive result of \cite{eigenvalue} is applicable to half-regular graphs and can be used to compute $N_4$.  
Furthermore, in Remark \ref{R2}, we showed that, in general, one cannot find $N_g$ for $g = 6+2t$, where $t \geq 2$ is an even number, in half-regular graphs 
just by using the information of spectrum and degree sequences. 
In this section, we complement these results.
We  present counter-examples  for $g$-cycles if $g=6+2t$, where $t \geq 1$ is an odd number, or for $i$-cycles with $i \geq g+2$, in graphs with $g \geq 6$.

\subsection{Counter-examples for $g$-cycles ($g=6+2t, t \geq 1$ and odd)}

Consider the disjoint union of two cycles, each of length $6+2t$, and two paths, each of length $5+t$, and call it $G_1$. Also, consider the disjoint union of a $(6+2t)$-cycle, a $(14+2t)$-cycle and two paths, each of length $t+1$, and call it $G_2$. 
One can see that both $G_1$ and $G_2$ are half-regular bipartite graphs and have the same degree sequences (the regular side has $11+3t$ degree-$2$ nodes and the irregular side has $9+3t$ degree-$2$ and $4$ degree-$1$ nodes).\footnote{Note that if $t$ is selected to be an even number, the graphs $G_1$ and $G_2$ will not be half-regular.}
Using (\ref{path}) and (\ref{cycle}), one can also see that $G_1$ and $G_2$ are cospectral, and both have girth $g=6+2t$. The number of $g$-cycles $N_g$, however, is different for each graph (two vs. one).  

\subsection{$g \geq 6$: Counter-examples for $i$-cycles, $i \geq g+2$}

In this subsection, we construct variable-regular bipartite graphs that have the same spectrum, degree sequences and girth $g \geq 6$, but have different multiplicities of $i$-cycles for $i \geq g+2$. 
We first start by constructing two graphs $\mathcal{G}_{t,k}$ and $\mathcal{G}_{t,k}'$, related by Godsil-McKay switching.

\underline{Construction of the graph $\mathcal{G}_{t,k}$}:
Let $t$ and $k$ be two even integers such that $t \geq k\geq 0$ and $t>0$. Consider the graph $G_{t,k}$ which was constructed in Subsection \ref{subsec33}. For each node $z$ in the set $\{a_i,b_i,c_i,d_i,f_i: i \:\:\:\text{even}\}\cup \{v_1\}$, add 
a new node $z'$ to $G_{t,k}$, and connect $z$ to $z'$.  Call the resultant variable-regular graph $\mathcal{G}_{t,k}$. As an example, Fig. \ref{graphGG88}(a) shows $\mathcal{G}_{4,2}$. 
The graph $\mathcal{G}_{t,k}$ is bipartite and has one cycle of length $t+k+4$, one cycle of length $t+k+6$ and one cycle of length $6+2t$.
(Note that if $t=k$, then the graph $\mathcal{G}_{t,k}$ has one cycle of length $2t+4$, and two cycles of length $2t+6$.) 

\underline{Construction of the graph $\mathcal{G}_{t,k}'$}:
We use Godsil-McKay switching of Theorem \ref{Th5} to construct $\mathcal{G}_{t,k}'$ from $\mathcal{G}_{t,k}$. 
Let $\ell=3t/2+1$, and for each $i$, $1\leq i \leq t+1$, let $X_i=\{ a_i,b_i,c_i,d_i\}$. Also, for each $i$, $t+2\leq i \leq 3t/2+1$, let $j=2(i-t-1)$, and $X_i=\{ a_j',b_j',c_j',d_j'\}$.
We thus have $Y= \{v_1,v_2,u,f_1,f_2,\ldots ,f_{k+1}\}\cup  \{f_i': i \: \: \:\text{even}\} \cup \{v_1'\}$. It can be seen that this partitioning satisfies all the required conditions of Theorem \ref{Th5}.
We thus apply the switching and obtain the graph $\mathcal{G}_{t,k}'$. The graph $\mathcal{G}_{t,k}'$ can also be constructed by the following approach: 
Consider the graph $G_{t,k}'$ which was constructed in Subsection \ref{subsec22}. For each node $z$ in the set $\{a_i,b_i,c_i,d_i,f_i: i \text{even}\}\cup \{v_1\}$, add a new node $z'$ to $G_{t,k}'$, and connect $z$ to $z'$. 
As an example, Fig. \ref{graphGG88}(b) shows the graph $\mathcal{G}_{4,2}'$. The graph $\mathcal{G}_{t,k}'$ is also variable-regular bipartite and has the same spectrum and degree sequences as $\mathcal{G}_{t,k}$. 
It however, has only one cycle of length $t+k+4$.  

\begin{figure}[ht]
\begin{center}
\includegraphics[scale=.30]{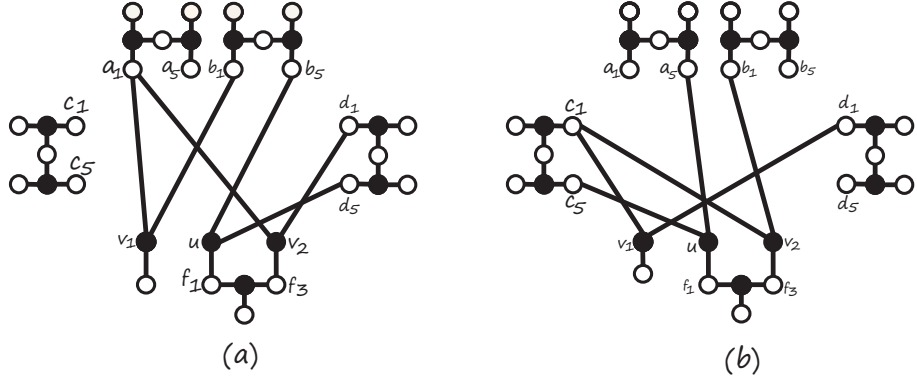}
\caption{Graphs (a) $\mathcal{G}_{4,2}$ and (b) $\mathcal{G}_{4,2}'$. 
} \label{graphGG88}
\end{center}
\end{figure}

Let $i$ be an even number. The graph $\mathcal{G}_{i+2,i}$ has one $(2i+6)$-cycle, one $(2i+8)$-cycle and one $(2i+10)$-cycle. The graph
$\mathcal{G}_{i+2,i}'$, however, has only one cycle of length $2i+6$. 
Now, for fixed integers $j$ and $k$ satisfying $j \geq k \geq 1$, consider the disjoint union of graphs $\mathcal{G}_{2k,2k-2}, \mathcal{G}_{2k+2,2k}, \ldots, \mathcal{G}_{2j,2j-2}$, and call it ${\cal F}_{j,k}$.  Also, consider the disjoint union of graphs 
$\mathcal{G}_{2k,2k-2}', \mathcal{G}_{2k+2,2k}', \ldots, \mathcal{G}_{2j,2j-2}'$, and call it ${\cal F}_{j,k}'$. Both ${\cal F}_{j,k}$ and ${\cal F}_{j,k}'$ have the same spectrum and degree sequences. They also have the same girth of $4k+2$, and both have one $(4k+2)$-cycle. They however, have different number of $\ell$-cycles for any $4k+2 < \ell \leq 4j+6$.

To cover the cases where $g=4(k+1), k \geq 1$, let $k'$ be an odd number satisfying $k' > 2k+1$, and consider two graphs $G_1$ and $G_2$, where $G_1$ is the disjoint union of the cycle $C_{4(k+1)}$ and two copies of the
path $P_{k'}$, and $G_2$ is the disjoint union of $C_{2(k'+1)}$, and two copies of the path $P_{2k+1}$. One can see that $G_1$ and $G_2$ are half-regular bipartite graphs with similar degree sequences. It can also be seen, using (\ref{path}) and (\ref{cycle}), that both graphs have the same spectrum. The two graphs, however, have different cycle distributions, i.e., while $G_1$ has one cycle of length $4(k+1)$, $G_2$ has one cycle of larger length $2(k'+1)$. Now, if one considers the disjoint unions of $G_1$ and $G_2$ with a cycle of length $4(k+1)$, then the resultant graphs both have the same girth of $4(k+1)$, but they have different number of cycles of length $2(k'+1)$.

\section{Concluding remarks}
\label{sec5}

It is well-known that the number of closed walks in a graph can be computed using the spectrum of the graph. It is also known that the multiplicity 
of cycles of length larger than three cannot be determined only by the knowledge of the spectrum. Recently, in~\cite{blake2017short},~\cite{eigenvalue}, 
it was shown that adding the knowledge of degree sequences for bipartite graphs to the information about the spectrum will enable the computation of multiplicities of cycles of certain lengths
in bi-regular, half-regular and irregular graphs. (See Table~\ref{TXY1}.)
In this work, we complemented the results of~\cite{blake2017short, eigenvalue}, and demonstrated, by constructing counter-examples, that for the remaining cases,
the information of the spectrum and degree sequences is insufficient, in general, to determine the multiplicity of cycles. An interesting topic of research would be to 
determine what extra information, in addition to degree sequences and spectrum, is required to compute the multiplicity of cycles of length larger than or equal to $2g$.

In Theorem \ref{V2Th2}, we proved that Godsil-McKay switching preserves the degree sequences of bi-regular bipartite graphs. It is also known that the spectrum of the graph is preserved under this switching. This implies that the multiplicity of short cycles of length up to $2g-2$ remains unchanged with the application of the Godsil-McKay switching to a bi-regular bipartite graph. On the other hand, short cycles and their combinations form graphical objects that trap the iterative decoding algorithms of LDPC codes. An interesting topic would be to study the effect of Godsil-McKay switchings on the distribution of trapping sets and the possibility of reducing the multiplicity of trapping sets through the application of this switching.

\bibliographystyle{plain}

\end{document}